\newtheorem{lem}{Lemma}
\newtheorem{rem}{Remark}
\newtheorem{defn}{Definition}
\newtheorem{cor}{Corollary}
\newtheorem{assum}{Assumption}
\newtheorem{theo}{Theorem}
\newtheorem{pb}{Problem}
\newcommand{\tp}{\intercal}		
\DeclareMathOperator\VVEC{\textsl{vec}} 
\title{\LARGE \bf
Decentralized Control Problems with Substitutable Actions
}
\author{Seyed Mohammad Asghari and Ashutosh Nayyar
\thanks{S. M. Asghari and A. Nayyar are with the Department of
Electrical Engineering, University of Southern California, Los Angeles,
CA 90089 USA (e-mail: 
        {\tt\small asgharip@usc.edu}; {\tt\small ashutosn@usc.edu}).}%
 \thanks{This research was supported by NSF under grants ECCS 1509812 and CNS 1446901.}
}
\begin{document}

\maketitle
\thispagestyle{empty}
\pagestyle{empty}

\begin{abstract}

We consider a decentralized system with multiple controllers and define  substitutability of one controller by another in open-loop strategies. We explore the implications of this property on  the optimization of closed-loop strategies. In particular, we focus on the decentralized LQG problem with substitutable actions. Even though the problem we formulate does not belong to the known classes of ``simpler'' decentralized problems such as partially nested or quadratically invariant problems,  our results show that, under the substitutability assumption,  linear strategies are optimal and  we provide a complete state space characterization of optimal strategies. 
 We also identify a family of information structures that all give the same optimal cost as the centralized information structure under the substitutability assumption.
Our results suggest that open-loop substitutability can work as a counterpart of  the information structure requirements that enable simplification of decentralized control problems. 

\end{abstract}

\section{Introduction}\label{sec:intro}

%
%

The difficulty of finding optimal strategies in decentralized control problems has been well-established in the literature \cite{witsenhausen, LipsaMartins:2011b,blondel}. In general, the optimization of strategies can be a non-convex problem over infinite dimensional spaces \cite{YukselBasar:2013}. Even the celebrated linear quadratic Gaussian (LQG) model of centralized control presents difficulties in the decentralized setting \cite{ witsenhausen, LipsaMartins:2011b, mahajan_survey}. There has been significant interest in identifying classes of decentralized control problems that are more tractable.  Information structures of decentralized control problems, which describe what information is available to which controller,  have been closely associated with their tractability. Problems with partially nested \cite{HoChu:1972} or stochastically nested information structures \cite{Yuksel:2009} and problems that satisfy quadratic invariance \cite{RotkowitzLall:2006} or funnel causality \cite{BamiehVoulgaris:2005} properties have been identified as ``simpler'' than the general decentralized control problems. 


In this paper, instead of starting from the information structure of the problem, we  first look at open-loop strategies under which controllers take actions without any observations. Stated another way, we start with a trivially simple information structure: no controller knows anything (except, of course, the model of the system and the cost objective). 


We define a property of open-loop decentralized control, namely the substitutability of one controller by another, and explore its implications on  optimization of closed-loop strategies (under which controllers take actions as functions of their observations). In particular, we focus on the decentralized LQG problem with substitutable actions. Even though the problem we formulate does not belong to one of the simpler classes mentioned earlier (partially nested, quadratic invariant etc.),  (i) our results show that linear strategies are optimal;  (ii)  we provide a complete state space characterization of optimal strategies; (iii) we also identify a family of information structures that all achieve the same cost as the centralized information structure.
Our results suggest that open-loop substitutability can work as a counterpart of  the information structure requirements that enable simplification of decentralized control problems. 

Our work shares conceptual similarities with the work on internal quadratic variance \cite{lessard_iqi, Lessard:phd} which identified problems that are not quadratically invariant but can still be reduced to (infinite dimensional) convex programs. In contrast to this work, we explicitly identify optimal control strategies.

\subsection{Notation}
Uppercase letters denote random variables/vectors and their corresponding realizations are represented by lowercase letters. Uppercase letters are also used to denote matrices. $\mathrm{E}[\cdot]$ denotes  the expectation of a random variable. When random variable $X$ is normally distributed with mean $\mu$ and variance $\Sigma$, it is shown as $X \sim \mathcal{N}(\mu, \Sigma)$.

For a sequence of column vectors $X, Y, Z,...$, the notation $\textsl{vec}(X,Y,Z,...)$ denotes vector $[X^{\tp}, Y^{\tp}, Z^{\tp},...]^{\tp}$. Furthermore, the vector $\textsl{vec}(X_1, X_2,...,X_t)$ is denoted by $X_{1:t}$. The transpose and Moore-Penrose pseudo-inverse of matrix $A$ are denoted by $A^{\tp}$ and $A^{\dagger}$, respectively. 
The identity matrix and zero vector are denoted by $I$ and 0 respectively and their dimensions are inferred from the context.

\section{Substitutable Actions}



We consider a stochastic system with $n$ controllers.  The  dynamics of the system are given as:
\begin{align}
\label{equation:xx1}
X_{t+1} = f(X_t,U^1_t,\ldots, U^n_t, W_t), \quad t=1,\ldots,T-1.
\end{align}
where $X_t$ is the state of the system at time $t$, $U^i_t$ is the action of controller $i$ at time $t$ and $W_t$ is a random noise variable. The state takes value in the set $\mathcal{X}$, the control action of the $i$th controller takes value in the set $\mathcal{U}^i$ and the noise $W_t$ takes value in the set $\mathcal{W}$. We use $U_t$ to denote the vector $\VVEC(U^1_t,\ldots,U^n_t)$. 

The system operates in discrete
time for a horizon $T$. At time step $t$, the system incurs a cost given as a function of the state and control actions: $c(X_t,U^1_t,\ldots,U^n_t)$. The control objective is to minimize the expected value of the total cost accumulated over the $T$ time steps: $\mathrm{E}[\sum_{t=1}^T c(X_t,U^1_t,\ldots,U^n_t)]$.

We say that controller $i$ is using open-loop control strategy if its control actions are  a function only of time and not of any information (observations) obtained from the system. Otherwise, we say that controller $i$ is using a closed-loop control strategy.
Based on the dynamics and the cost function, we can define a notion of open-loop substitutability among controllers.

\begin{defn}\label{defn:subs_1}
We say that controller 1 can substitute for controller 2 in open-loop control if for every $u^1 \in \mathcal{U}^1, u^2 \in \mathcal{U}^2$,  there exists a control action $v^1 \in \mathcal{U}^1$ for controller $1$  such that for all $x \in \mathcal{X}, w \in \mathcal{W}$, $u^i \in \mathcal{U}^i, i=3,\ldots,n$,
\begin{subequations}
\begin{equation}
f(x,u^1,u^2,u^3,\ldots,u^n,w) = f(x,v^1,0,u^3,\ldots,u^n,w),
\end{equation}
\begin{equation}
\mbox{and}~c(x,u^1,u^2,u^3,\ldots,u^n) = c(x,v^1,0,u^3,\ldots,u^n).
\end{equation}
\end{subequations}
 $v^1$ is a function only of $u^1$ and $u^2$, that is, $v^1=l^{1,2}(u^1,u^2)$, for some function $l^{1,2}$. We will call $l^{1,2}$ the \emph{substitution function} for controller 1 and 2.
\end{defn}
A similar notion of open-loop substitutability can be defined for any pair of controllers in the system. 

If we are considering only open-loop control strategies for all controllers and if controller 1 can substitute for controller 2 in open-loop control (as per Definition \ref{defn:subs_1}), then there is no loss of optimality in fixing all actions of controller 2 to $0$. This is the intuitive meaning of substitutability --- since controller 1 can substitute for controller 2, controller 2 does not need to do anything.  

The open-loop substitutability property has some closed-loop implications. Let us denote by $I^i_t$ the collection of all observations and past control actions that are available to controller $i$ at time $t$. Under a closed-loop strategy, controller $i$ selects its control action as a function of $I^i_t$, that is, 
$U^i_t = g^i_t(I^i_t)$. The collection of functions $ g^i :=\{g^i_1,g^i_2,\ldots,g^i_T\}$ is referred to as controller $i$'s (closed-loop) strategy.

\begin{lem}\label{lem:1}
 Suppose that controller 1 can substitute for controller 2 in open-loop control and that for all time instants $t$, $I^1_t \supseteq I^2_t$, then for any strategies $h^1,h^2,\ldots,h^n$ of $n$ controllers,  there exists strategies $g^1, g^2$ for controllers 1 and 2,  with   $g^2_t(I^2_t) =0$ for all $t$, such that $g^1,g^2,h^3,\ldots,h^n$ achieve the same cost as $h^1,h^2,\ldots,h^n$.
\end{lem}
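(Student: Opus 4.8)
The plan is to build $g^1$ and $g^2$ by hand and then verify, on every sample path, that the closed-loop system behaves identically under the two profiles $(h^1,h^2,h^3,\ldots,h^n)$ and $(g^1,g^2,h^3,\ldots,h^n)$. Put $g^2_t \equiv 0$ for all $t$. For controller $1$, use the substitution function of Definition~\ref{defn:subs_1}: because $I^1_t \supseteq I^2_t$, controller $1$'s information contains everything controller $2$ sees, so controller $1$ can evaluate both $h^1_t$ and $h^2_t$, and we let it play
\[
g^1_t := l^{1,2}\bigl(h^1_t(I^1_t),\, h^2_t(I^2_t)\bigr).
\]
One bookkeeping remark: switching controller $1$'s strategy also changes the record of controller $1$'s own past actions that sits inside $I^1_t$, so strictly speaking $g^1_t$ should be realized by first reconstructing, from controller $1$'s current observations, the information sets $I^1_t$ and $I^2_t$ that the primitive observations generate under $h$, and then applying $l^{1,2}$ to $h^1_t$ and $h^2_t$ evaluated at those reconstructed information sets. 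This reconstruction is possible along every sample path because, under the standard model in which observations are generated by the system (hence are functions of the state and noise trajectories), all information sets and all control actions are determined recursively by the primitive observations; and $I^1_t \supseteq I^2_t$ ensures controller $1$ has the observations needed to carry it out.

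The core is an induction on $t$. Fix a realization of $X_1$ and $W_{1:T-1}$ (and of any observation noise). I claim that the state trajectories agree, $X_t = \tilde X_t$ for all $t$, and that $U^i_t = \tilde U^i_t$ for every $i \geq 3$, where tildes denote quantities under $(g^1,g^2,h^3,\ldots,h^n)$. At $t=1$ the states agree by construction and no actions have been taken, so every information set agrees, hence $\tilde U^i_1 = U^i_1$ for $i\geq 3$, while $\tilde U^2_1 = 0$ and $\tilde U^1_1 = l^{1,2}(U^1_1,U^2_1)$. For the step, suppose the trajectories agree through time $t$. Then the observation parts of all information sets at time $t$ agree (being functions of the common trajectory), and the own past actions of controllers $3,\ldots,n$ agree by the inductive hypothesis, so $\tilde I^i_t = I^i_t$ and therefore $\tilde U^i_t = h^i_t(\tilde I^i_t) = U^i_t$ for $i\geq 3$; by construction $\tilde U^1_t = l^{1,2}(U^1_t, U^2_t)$ and $\tilde U^2_t = 0$. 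Applying the two identities of Definition~\ref{defn:subs_1} with $x = X_t = \tilde X_t$, $w = W_t$, $u^1 = U^1_t$, $u^2 = U^2_t$, $v^1 = \tilde U^1_t$, and $u^i = U^i_t = \tilde U^i_t$ for $i\geq3$ yields $X_{t+1} = \tilde X_{t+1}$ and $c(X_t,U^1_t,\ldots,U^n_t) = c(\tilde X_t,\tilde U^1_t,\ldots,\tilde U^n_t)$, completing the induction.

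Summing the per-stage cost equality over $t=1,\ldots,T$ on each sample path and taking expectations gives $\mathrm{E}[\sum_{t=1}^{T} c(X_t,U^1_t,\ldots,U^n_t)] = \mathrm{E}[\sum_{t=1}^{T} c(\tilde X_t,\tilde U^1_t,\ldots,\tilde U^n_t)]$, which is the assertion, and $g^2_t(I^2_t)=0$ holds by construction. The step I expect to require the most care is the information-consistency claim inside the induction: one must be sure that replacing $h^1,h^2$ by $g^1,g^2$ disturbs neither what controllers $3,\ldots,n$ observe nor the implementability of $g^1$. Both rely on the observations being functions of the state and noise trajectories together with the hypothesis $I^1_t\supseteq I^2_t$; if the model allowed observations that depend directly on the raw actions of controllers $1$ or $2$, the statement would need a corresponding restriction, so I would fix the observation model explicitly before writing the proof.
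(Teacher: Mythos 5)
Your construction is exactly the paper's: set $g^2_t \equiv 0$ and $g^1_t = l^{1,2}\bigl(h^1_t(I^1_t),\, h^2_t(I^2_t)\bigr)$, noting that $I^1_t \supseteq I^2_t$ makes this implementable, and then verify that the new action pair has the same pathwise effect on dynamics and cost by the substitutability conditions. The paper compresses the verification into one sentence, whereas you spell out the sample-path induction and the legitimate bookkeeping point that $I^1_t$ contains controller 1's own past actions and must therefore be reconstructed from the primitive observations under the modified strategy; this is a correct and somewhat more careful rendering of the same argument.
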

\begin{proof}
Consider any arbitrary strategies $h^1,h^2,\ldots,h^n$ for the $n$ controllers. 
Define new strategy for controller 1  as follows:
\begin{equation}\label{eq:new_strategy}
g^1_t(I^1_t) = l^{1,2}(h^1_t(I^1_t),h^2_t(I^2_t)),
\end{equation}
where $l^{1,2}$ is the substitution function from Definition \ref{defn:subs_1}. Firstly, note that \eqref{eq:new_strategy} is a valid strategy for controller 1 because $I^2_t \subseteq I^1_t$. If this was not the case, the right hand side of \eqref{eq:new_strategy} would be using information that controller 1 may not have.

The result of the lemma then follows from the observation that the pair  $(U^1_t = g^1_t(I^1_t), U^2_t =0)$ will always have the same effects on dynamics and cost as $(U^1_t=h^1_t(I^1_t), U^2_t=h^2_t(I^2_t))$ because of the substitutability conditions.
\end{proof}

The condition $I^1_t \supseteq I^2_t$ is  necessary for Lemma \ref{lem:1}
to hold. It is easy to construct  examples where $I^1_t$ does not include $I^2_t$ and the second controller cannot be restricted to the ``always zero'' strategy without losing optimality. 

The statement of Lemma \ref{lem:1} can be intuitively interpreted as follows: The open-loop substitutability of controller 2 by controller 1 \emph{and} the fact that controller 1 is better informed make controller 2 essentially redundant for the purpose of cost optimization. 

Lemma \ref{lem:1} suggests that open-loop substitutability, combined with the information structure of the problem, can have implications about the closed-loop problem. In the rest of the paper, we consider a LQG control problem with multiple controllers   and obtain  results much sharper than Lemma \ref{lem:1} for  such problems.



\section{LQG  problem with state feedback}\label{sec:SF}
\subsection{System Model}\label{sec:sm}

We consider a stochastic system with $n$ controllers where
\begin{enumerate}
\item The state dynamics are given as 
\begin{equation}
\label{dynamics_eq}
X_{t+1} =AX_t + BU_t +W_t, \quad t=1,\ldots,T-1,
\end{equation}
where $X_t, W_t \in \mathbb{R}^{d_x}$ and $U_t \in \mathbb{R}^{d_u}$.
\item The cost at time $t$ is given as
\begin{equation}
\label{cost_eq}
c(X_t,U_t) = (MX_t +NU_t)^{\tp}(MX_t+NU_t).
\end{equation}
\item The initial state $X_1$ and the  noise variables $W_t, t=1,\ldots,T-1$ are independent and have Gaussian distributions.
\end{enumerate}
We will make the following assumption about the system.
\begin{assum}
Each controller can substitute for any other controller in open loop control. In other words, for every vector $u =\VVEC(u^1,u^2,\ldots,u^n)$, there exist control actions $v^i = l^i(u)$ for controller $i$, $i=1,\ldots,n$, such that
\begin{equation}
\label{ass_1}
Bu = B\begin{bmatrix} 0 \\ \vdots \\v^i \\ \vdots \\ 0 \end{bmatrix} ~~\mbox{and}~~Nu = N\begin{bmatrix} 0 \\ \vdots \\v^i \\ \vdots \\  0 \end{bmatrix}.
\end{equation}
\end{assum}
\vspace{2mm}
\begin{lem}
\label{lem_3}We can write the $B$ and $N$ matrices in terms of their blocks as 
\[ B = \begin{bmatrix} B^1 &\ldots &B^n\end{bmatrix}, \]
\[ N = \begin{bmatrix} N^1 &\ldots  &N^n\end{bmatrix}. \]

If Assumption 1 is true, then $v^i = \Lambda^i u$ \footnote[1]{
In the previous version of this work \cite{AsghariNayyar:2015}, we stated that if Assumption 1 is true, then $v^i = (B^i)^{\dagger} Bu= (N^i)^{\dagger} Nu$ and $(B^i)^{\dagger} B = (N^i)^{\dagger} N$. This statement is correct only if $B^i$ and $N^i$ are invertible so that $(B^i)^{\dagger} = (B^i)^{-1}$ and $(N^i)^{\dagger} = (N^i)^{-1}$. However, in general, the statement is incorrect. The correct expression for $v^i$ is $v^i = \Lambda^i u$ where $\Lambda^i$ is as given by \eqref{Lamba_eq}.
} satisfies \eqref{ass_1} for $i=1,\ldots,n$ where
\vspace{-2mm}
\begin{align}
\label{Lamba_eq}
\Lambda^i = \begin{bmatrix}
B^i \\
N^i
\end{bmatrix}^{\dagger}
\begin{bmatrix}
B \\
N
\end{bmatrix}.
\end{align}
\end{lem}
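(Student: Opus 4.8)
The plan is to recast the two matrix identities hidden in \eqref{ass_1} as a single identity for stacked matrices, and then invoke the defining projection property of the Moore--Penrose pseudo-inverse. First I would write $P := \bmat{B^i\\N^i}$ and $Q := \bmat{B\\N}$. Unpacking the right-hand sides of \eqref{ass_1}, which are just $B$ and $N$ applied to the vector that is $0$ in every block except the $i$th, where it equals $v^i$, the pair of conditions $Bu = B^i v^i$ and $Nu = N^i v^i$ is equivalent to the single equation $Qu = P v^i$.

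Next I would reinterpret Assumption 1 in terms of ranges: Assumption 1 asserts that for every $u$ there exists \emph{some} $v^i$ with $Qu = P v^i$, which is exactly the statement that $Qu \in \mathrm{range}(P)$ for every $u$, i.e. $\mathrm{range}(Q) \subseteq \mathrm{range}(P)$.

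Then I would use the standard fact that $P P^{\dagger}$ is the orthogonal projection onto $\mathrm{range}(P)$. Since $Qu$ already lies in $\mathrm{range}(P)$ by the previous step, the projection leaves it unchanged: $P P^{\dagger} Q u = Q u$ for all $u$. Consequently, the choice $v^i = \Lambda^i u = P^{\dagger} Q u$ satisfies $P v^i = Q u$, which, unstacked, is precisely the pair of conditions in \eqref{ass_1}. This simultaneously exhibits $\Lambda^i$ as a valid, and in fact linear, substitution map, so one may take $l^i(u) = \Lambda^i u$.

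I do not expect a genuine obstacle here; the only subtlety worth flagging is that the argument relies on the \emph{existence} clause of Assumption 1 to obtain the range inclusion $\mathrm{range}(Q)\subseteq\mathrm{range}(P)$, and not on any a priori formula for $v^i$. This is also what reconciles the statement with the footnote: the naive candidate $(B^i)^{\dagger} Bu$ fails in general because $Bu$ need not lie in $\mathrm{range}(B^i)$ on its own; it is the \emph{stacked} matrix $P$ whose range is guaranteed to contain $Qu$, which is why the pseudo-inverse must be taken of $\bmat{B^i\\N^i}$ rather than of $B^i$ and $N^i$ separately.
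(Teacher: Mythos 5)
Your proposal is correct and follows essentially the same route as the paper: both stack the two conditions of \eqref{ass_1} into the single linear system $\bmat{B^i\\N^i}v^i = \bmat{B\\N}u$ and then take $v^i$ to be the pseudo-inverse solution. The only difference is presentational --- the paper cites the general-solution formula $v^i = P^{\dagger}b + (I-P^{\dagger}P)y$ for a consistent system and sets $y=0$, whereas you verify directly via the projection property $PP^{\dagger}Qu = Qu$ that $P^{\dagger}Qu$ is a solution, making explicit that consistency (and hence the range inclusion) is exactly what Assumption 1 supplies.
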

\vspace{2mm}
\begin{proof}
The system of equations of \eqref{ass_1} is equivalent to a matrix equation of the form
\begin{align}
\label{subs_condition_eq_system}
P v^i = b, \mbox{where}  
\hspace{2mm}
P = \begin{bmatrix}
B^i \\
N^i
\end{bmatrix}, 
b = \begin{bmatrix}
B\\
N
\end{bmatrix}u.
\end{align}
The general solution of \eqref{subs_condition_eq_system} is $v^i = P^{\dagger}b + (I - P^{\dagger}P)y$ for arbitrary $y$ \cite{ben2006generalized}. By setting $y = 0$, we have $v^i = P^{\dagger}b$.
\end{proof}

An example of a system satisfying Assumption 1 is a two-controller LQG problem where the dynamics and the cost are functions only of the sum of the control actions, that is, ($u_t^1 + u_t^2$). This happens if $B^1 = B^2$ and $N^1= N^2$. In this case, using $v_t^1 = v_t^2 = u_t^1 + u_t^2$ satisfies (\ref{ass_1}), which means that controller 1 can substitute for controller 2 and vice versa.

\subsection{Information Structure}\label{sec:is}
We assume that the state vector $X_t$ consists of $n$ sub-vectors, that is $X_t = \VVEC(X^1_t,X^2_t,\ldots,X^n_t)$. $X^i_t$ can be interpreted as the state of the $i$th sub-system. 
 We assume a local state feedback with perfect recall information structure, that is, the information available to  controller $i$ at time $t$ is:
\begin{align}
\label{equation:xx2}
I_t ^i &= \lbrace X_{1:t} ^i, U_{1:t-1} ^i \rbrace, \hspace{10mm} i=1,2,\ldots,n. 
\end{align}
Controller $i$  chooses action $U_t ^i$ as a function of the information available to it. Specifically, for $i=1,\ldots,n$,
\begin{align}
\label{equation:xx3}
U_t ^i = g_t ^i(I_t ^i), \hspace{10mm} t=1,\ldots,T.
\end{align}
 The collection $g^i = (g^i _1,...,g^i _T)$ is called the control strategy of controller $i$. The performance of the control strategies $g^1$, $g^2, \ldots, g^n$ is measured by the expected cost 
\begin{align}
\label{equation:xx4}
&\mathcal{J}(g^1, \ldots, g^n) \notag \\
 &= \mathrm{E}^{g^1,\ldots,g^n}\left[\sum_{t=1}^{T} 
       (MX_t + NU_t)^{\tp}(MX_t + NU_t)\right]
\end{align}
where the expectation is with respect to the joint probability
distribution on $(X_{1:T}, U_{1:T})$ induced by the choice of $g^1,\ldots,g^n$.

\subsection{Optimal Strategies}

The optimization problem is defined as follows.

\begin{pb}
\label{Prob_1}
For the model described in section \ref{sec:sm} and \ref{sec:is},  find control strategies $g^1,\ldots, g^n$ for the $n$ controllers that minimize the expected cost given by (\ref{equation:xx4}). 
\end{pb}

\begin{rem}
Since we have not imposed any constraints on the matrices $A$ and $B$ in system dynamics,  Problem \ref{Prob_1} may not have partially nested information structure. Thus, we cannot guarantee, at this point, the optimality of  linear control strategies for this problem.
\end{rem}

In addition to the decentralized information structure described above, we will also consider the centralized information structure where all controllers have access to the entire state and action history.

\begin{pb}
\label{Prob_2}
For the model described in section \ref{sec:sm}, assume that  the information available to each controller is,
\begin{align}
\tilde{I}_t = \lbrace X_{1:t}, U_{1:t-1} \rbrace.
\end{align}
Controller $i$ chooses its action according to strategy $h^i=(h^i_1,\ldots,h^i_T)$,
  \[U^i_t = h^i_t(\tilde{I}_t).\]
The objective is to select controller strategies that minimize 
\begin{align}
\label{equation:xx6}
&\mathcal{J}(h^1,\ldots,h^n) \notag \\
  &= \mathrm{E}^{h^1,\ldots,h^n}\left[\sum_{t=1}^{T} 
       (MX_t + NU_t)^{\tp}(MX_t + NU_t)\right]
\end{align}
where the expectation is with respect to the joint probability
distribution on $(X_{1:T}, U_{1:T})$ induced by the choice of $h^1,\ldots,h^n$.
\end{pb}

\subsection{Main results}
In this section, we will show that  we can construct optimal strategies in Problem \ref{Prob_1} from the optimal control strategies of the centralized problem (Problem \ref{Prob_2}). We start with the following observations.

\begin{lem}
\begin{enumerate}
\item The optimal cost in Problem \ref{Prob_2} (with centralized information structure) is a lower bound on the optimal cost in Problem \ref{Prob_1} (with decentralized information structure).
\item The optimal strategies in Problem \ref{Prob_2} are  linear functions of the state, that is, the control vector $U_t = \VVEC(U^1_t,\ldots,U^n_t)$ is given as
\begin{align}
U_t &= K_tX_t = [K^1_t \ldots K^n_t]X_t \notag \\
&= K^1_tX^1_t + \ldots + K^n_tX^n_t.\label{eq:central}
\end{align}
$K_t$ is the centralized gain matrix and $K^i_t$ is its $i$th block.
\end{enumerate}
\end{lem}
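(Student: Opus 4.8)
The plan is to prove the two parts separately.

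\textbf{Part 1 (lower bound).} This is a comparison of information structures. For every $i$ and every $t$, the decentralized information $I^i_t=\{X^i_{1:t},U^i_{1:t-1}\}$ is a deterministic sub-collection of the centralized information $\tilde{I}_t=\{X_{1:t},U_{1:t-1}\}$, so $I^i_t$ is a function of $\tilde{I}_t$. Hence, given any admissible profile $(g^1,\dots,g^n)$ for Problem~\ref{Prob_1}, setting $h^i_t(\tilde{I}_t):=g^i_t(I^i_t)$ defines an admissible profile for Problem~\ref{Prob_2} which, on the common probability space (same $X_1$, same noise realizations, same dynamics), produces pathwise the same $(X_{1:T},U_{1:T})$ and therefore the same cost. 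Thus the set of achievable costs of Problem~\ref{Prob_1} is contained in that of Problem~\ref{Prob_2}, so the infimal cost of Problem~\ref{Prob_2} is no larger than that of Problem~\ref{Prob_1}.

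\textbf{Part 2 (linear optimality of the centralized problem).} This is the classical finite-horizon LQG computation; the only subtlety is that $N^{\tp}N$, and the analogous Hessians appearing later, may be singular, so Moore--Penrose pseudo-inverses replace inverses. I would run backward dynamic programming on the value functions $V_T(x)=\min_u (Mx+Nu)^{\tp}(Mx+Nu)$ and $V_t(x)=\min_u\{(Mx+Nu)^{\tp}(Mx+Nu)+\mathrm{E}[V_{t+1}(Ax+Bu+W_t)]\}$, proving by backward induction that $V_t(x)=x^{\tp}P_tx+\beta_t$ with $P_t\succeq 0$. The cleanest induction step is to write $P_{t+1}=L_{t+1}^{\tp}L_{t+1}$, so that $\mathrm{E}[V_{t+1}(Ax+Bu+W_t)]=\|L_{t+1}(Ax+Bu)\|^2+\tr(P_{t+1}\cov(W_t))+\beta_{t+1}$; then the stage minimand in $u$ is $\big\|\bmat{M\\ L_{t+1}A}x+\bmat{N\\ L_{t+1}B}u\big\|^2$ plus a term not involving $u$. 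This is an ordinary linear least-squares problem, whose minimum is always attained, at $u=K_tx$ with $K_t=-\bmat{N\\ L_{t+1}B}^{\dagger}\bmat{M\\ L_{t+1}A}$ (equivalently, via the normal equations, $K_t=-(N^{\tp}N+B^{\tp}P_{t+1}B)^{\dagger}(N^{\tp}M+B^{\tp}P_{t+1}A)$), and whose optimal residual is a quadratic form $x^{\tp}P_tx$ with $P_t\succeq 0$, closing the induction. The terminal case $t=T$ is the same computation with $L_{T+1}=0$.

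\textbf{From the DP to the statement.} Finally I would invoke the standard verification/sufficient-statistic argument: for an arbitrary admissible centralized profile one shows, by backward induction on $t$, that $\mathrm{E}\big[\sum_{s=t}^{T}c(X_s,U_s)\,\big|\,\tilde{I}_t\big]\ge V_t(X_t)$, while the state-feedback profile $U_t=K_tX_t$ attains equality at every stage; hence it is optimal and achieves cost $\mathrm{E}[V_1(X_1)]$. Splitting $K_tX_t=K^1_tX^1_t+\dots+K^n_tX^n_t$ over the state blocks yields the form in \eqref{eq:central}, and since every controller in Problem~\ref{Prob_2} observes all of $X_t$, each can implement its block. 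The only genuinely delicate point is keeping the inner minimization over $u$ finite and attained at every stage despite the possible rank deficiency of $N$ and of $N^{\tp}N+B^{\tp}P_{t+1}B$; the least-squares reformulation above (which uses only $P_{t+1}\succeq 0$ and the Gram structure of the stage-cost matrix $\bmat{M&N}^{\tp}\bmat{M&N}$) is what makes this routine rather than delicate. Integrability of all expectations is immediate from the Gaussianity, hence finite second moments, of $X_1$ and the $W_t$.
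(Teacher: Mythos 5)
Your proposal is correct, and it is essentially the argument the paper implicitly relies on: the paper states this lemma without proof, treating part 1 as the standard ``more information cannot hurt'' observation and part 2 as the classical centralized LQG result. Your part 1 (embedding any decentralized profile into the centralized one via $I^i_t \subseteq \tilde{I}_t$ and comparing achievable costs pathwise) and part 2 (backward DP with the stage minimization recast as a linear least-squares problem, using pseudo-inverses to handle possible rank deficiency of $N^{\tp}N + B^{\tp}P_{t+1}B$) are exactly the standard proofs, and your care with the degenerate case is consistent with the paper's own use of Moore--Penrose pseudo-inverses elsewhere (e.g., in the definition of $\Lambda^i$). No gaps.
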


We can now state our main result for the state feedback case.

\begin{theo}
\label{theorem_1}
The optimal control strategies in Problem \ref{Prob_1} are given as
\begin{align}
\label{equation:x13}
U_t^i = l_t^i(K^i_t X_t ^i) = \Lambda^i K^i_t X_t ^i, \hspace{5mm} i=1,2,\ldots,n,
\end{align}
where $\Lambda^i$ is given by \eqref{Lamba_eq} and $K^i_t$ is the $i$th block of the centralized gain matrix in \eqref{eq:central}. Moreover, the optimal strategies in Problem \ref{Prob_1} achieve the same cost as the optimal strategies in Problem \ref{Prob_2}.
\end{theo}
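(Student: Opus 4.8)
The plan is to prove that the proposed strategies \eqref{equation:x13} are feasible under the decentralized information structure and that they induce, sample path by sample path, exactly the same state trajectory and per-stage costs as the centralized optimal strategy $U_t=K_tX_t$; optimality for Problem~\ref{Prob_1} and equality of the two optimal costs then follow immediately from the preceding lemma (part 1 of which says the centralized cost lower-bounds the decentralized one, and part 2 of which supplies the linear form \eqref{eq:central}).

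First I would check feasibility: $\Lambda^iK^i_t$ is a fixed matrix and $X^i_t$ is part of $I^i_t$ in \eqref{equation:xx2}, so $U^i_t=\Lambda^iK^i_tX^i_t$ is a legitimate (and linear) strategy for controller $i$. The key step is then a decomposition-plus-substitution observation. Using \eqref{eq:central}, write the centralized control as $K_tX_t=\sum_{j=1}^n K^j_tX^j_t$ and regard each summand $K^j_tX^j_t\in\mathbb{R}^{d_u}$ as a ``full'' control vector to which Lemma~\ref{lem_3} (via Assumption~1) applies; this gives $B^j\Lambda^j(K^j_tX^j_t)=B(K^j_tX^j_t)$ and $N^j\Lambda^j(K^j_tX^j_t)=N(K^j_tX^j_t)$ for every $j$. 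Under the strategies \eqref{equation:x13} the aggregate control is $U_t=\VVEC(\Lambda^1K^1_tX^1_t,\ldots,\Lambda^nK^n_tX^n_t)$, hence $BU_t=\sum_j B^j\Lambda^jK^j_tX^j_t=\sum_j BK^j_tX^j_t=BK_tX_t$ and, identically, $NU_t=NK_tX_t$. Fixing a realization of $X_1$ and $W_{1:T-1}$, a short induction on $t$ (base case: the initial state is common to both problems; inductive step: $X_{t+1}=AX_t+BU_t+W_t$ with $BU_t=BK_tX_t$) shows the state trajectory under \eqref{equation:x13} coincides with the one under $U_t=K_tX_t$; consequently $MX_t+NU_t=MX_t+NK_tX_t$ for all $t$, so the instantaneous costs agree pathwise.

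Taking expectations (over the same distributions of $X_1$ and $W_{1:T-1}$ in both problems), the strategies \eqref{equation:x13} achieve the optimal cost of Problem~\ref{Prob_2}; since that cost lower-bounds the optimal cost of Problem~\ref{Prob_1}, the strategies are optimal for Problem~\ref{Prob_1} and the two optimal costs are equal. The only point requiring a moment's thought is that $\Lambda^i$, defined as the substitution of a single controller for an \emph{entire} control vector, may be applied termwise to the pieces $K^j_tX^j_t$ of $K_tX_t$ and the outcomes reassembled using the linearity of $B$ and $N$; once this is noticed, the remainder is routine bookkeeping. I expect no genuine obstacle here — the substitutability assumption was tailored precisely so that each controller's own state feedback can reproduce its share of the centralized action.
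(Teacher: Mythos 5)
Your proof is correct, and the engine driving it is the same one the paper uses: decompose the centralized action as $K_tX_t=\sum_j K^j_tX^j_t$, treat each summand as a full control vector, and invoke the substitutability identities $B^j\Lambda^j u=Bu$, $N^j\Lambda^j u=Nu$ from Assumption~1 and Lemma~\ref{lem_3}. Where you diverge is in how the ``same performance'' claim is made rigorous. The paper only sketches Theorem~\ref{theorem_1} and defers the full argument to the output-feedback case (Theorem~\ref{theorem_2}), which it proves by backward induction on cost-to-go functions $\mathbf{V}_r$ and $\hat{\mathbf{V}}_r$, comparing conditional expectations stage by stage. You instead fix a realization of $X_1$ and $W_{1:T-1}$ and show by forward induction that the two closed-loop state trajectories coincide exactly, hence the per-stage costs agree pathwise before any expectation is taken. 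For the state-feedback setting your route is more elementary, self-contained, and actually proves something slightly stronger (almost-sure rather than in-expectation equivalence); the paper's dynamic-programming machinery is the form of the argument it reuses for output feedback, where one must additionally track the coupling between the centralized estimate $Z_t$ and the local states $S^i_t$ (Lemma~\ref{lem_6}). Both arguments rest on the same two pillars (the lower bound from the centralized problem and the block decomposition plus substitution), and your handling of the one delicate point --- that $\Lambda^j$ applies to the full vector $K^j_tX^j_t$ and the results recombine by linearity of $B$ and $N$ --- matches the paper's intent exactly.
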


\emph{Proof Outline:} Firstly, observe that the strategies given by \eqref{equation:x13} are valid control strategies under the information structure of Problem \ref{Prob_1}.  The optimal control vector $U_t$ under the centralized strategy is a superposition of terms of the form $K^i_tX^i_t$. Note that the term $K^i_tX^i_t$ consists of $n$ sub-vectors (one corresponding to each controller's action).
\[ K^i_t X^i_t = \begin{bmatrix}
K^{i1}_t \\
K^{i2}_t\\
\vdots\\
K^{in}_t
\end{bmatrix}X^i_t.\]
Such a control vector cannot be implemented in the decentralized information structure since it requires each controller to have access to $X^i_t$. 
We now exploit the open loop substitutability of the problem to state that the vector $\textsl{vec}(0, \ldots,l_t^i(K^i_t X_t ^i),\ldots, 0)$ will have the same effect as $K^i_tX^i_t$. This allows us to construct a decentralized strategy with the same performance as the centralized one. We provide a  detailed proof for the more general  case of the output feedback problem in the next section.

We can also derive the following corollary of Theorem \ref{theorem_1}.

\begin{cor}\label{cor:1}
For the model described in section \ref{sec:sm}, consider any information structure under which   the  information of controller $i$ at time $t$, $\hat{I}^i_t$, satisfies 
\[ \lbrace X^i_{t} \rbrace \subseteq \hat{I}^i_t \subseteq \lbrace X_{1:t}, U_{1:t-1} \rbrace,\]
for all $i=1,\ldots,n$ and $t=1,\ldots,T$.
Then, the optimal strategies in this information structure are the same as in Theorem \ref{theorem_1}.
\end{cor}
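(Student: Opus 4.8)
The plan is to sandwich the optimal cost of the intermediate information structure between two costs that are already known to be equal by Theorem~\ref{theorem_1}. Specifically, for any information structure $\{\hat{I}^i_t\}$ satisfying $\{X^i_t\} \subseteq \hat{I}^i_t \subseteq \{X_{1:t}, U_{1:t-1}\}$, I first observe that it is ``sandwiched'' between the local state feedback structure of Problem~\ref{Prob_1} (where controller $i$ sees $I^i_t = \{X^i_{1:t}, U^i_{1:t-1}\}$) and the centralized structure of Problem~\ref{Prob_2} (where each controller sees $\tilde{I}_t = \{X_{1:t}, U_{1:t-1}\}$). Actually one must be slightly careful: the local structure $I^i_t$ of Problem~\ref{Prob_1} contains past local states $X^i_{1:t}$ and past local actions, which need not be a subset of $\hat I^i_t$ in general; so rather than a literal set inclusion at the level of Problem~\ref{Prob_1}'s information, I would use the \emph{strategies} of Theorem~\ref{theorem_1} directly, which depend only on $X^i_t$.

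The key steps, in order, are as follows. First, the lower bound: exactly as in part~(1) of the unnumbered Lemma preceding Theorem~\ref{theorem_1}, any strategy feasible in the $\{\hat{I}^i_t\}$ structure is also feasible in the centralized structure (since $\hat{I}^i_t \subseteq \tilde I_t$), so the optimal cost under $\{\hat I^i_t\}$ is at least the optimal centralized cost of Problem~\ref{Prob_2}. Second, the upper bound: the strategy $U^i_t = \Lambda^i K^i_t X^i_t$ from Theorem~\ref{theorem_1} is a valid strategy under the $\{\hat I^i_t\}$ structure, because it is a function of $X^i_t$ alone and $\{X^i_t\}\subseteq \hat I^i_t$; hence the optimal cost under $\{\hat I^i_t\}$ is at most the cost achieved by this strategy, which by Theorem~\ref{theorem_1} equals the optimal centralized cost of Problem~\ref{Prob_2}. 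Combining the two inequalities, the optimal cost under $\{\hat I^i_t\}$ equals the optimal centralized cost, and it is attained by the strategies of Theorem~\ref{theorem_1}; moreover any optimal strategy under $\{\hat I^i_t\}$ must achieve this common value, so the strategies of Theorem~\ref{theorem_1} are optimal for this information structure as well.

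I do not expect a serious obstacle here — the argument is a short squeeze using results already established. The one subtlety worth stating carefully is why a strategy feasible under $\{\hat I^i_t\}$ is also realizable under the centralized structure: this needs the perfect-recall/closed-loop formulation so that the induced joint distribution on $(X_{1:T}, U_{1:T})$, and hence the cost, is unchanged when controller $i$ ignores the extra information available centrally. I would phrase this as: any map $g^i_t$ on $\hat I^i_t$ can be composed with the natural projection $\tilde I_t \to \hat I^i_t$ to give a centralized strategy inducing the same closed-loop behavior. The rest is immediate, so the proof is essentially two sentences of inequality-chasing plus this feasibility remark.
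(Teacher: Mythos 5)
Your proposal is correct and matches the argument the paper intends: the paper states Corollary~\ref{cor:1} without an explicit proof, precisely because it follows from the two facts you use --- the centralized cost lower-bounds any coarser information structure, and the Theorem~\ref{theorem_1} strategies depend only on $X^i_t$ and hence remain feasible and achieve that bound. Your caveat about not literally nesting Problem~\ref{Prob_1}'s information set $I^i_t$ inside $\hat{I}^i_t$, and instead invoking the strategies directly, is exactly the right way to phrase it.
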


Corollary \ref{cor:1} identifies memoryless local state feedback as the minimal information structure that  achieves the optimal centralized cost. In other words, it describes the minimal communication and memory requirements for controllers to achieve the optimal centralized cost.

\section{ LQG problem with output feedback}\label{sec:OF}
\subsection{System Model}\label{sec:OF:sm}
We consider the system model described in section \ref{sec:sm} and assume that each controller makes a noisy observation of the system state given as
\begin{align}
\label{equation:x28}
Y_{t}^i = C^{i}X_t + V_{t}^i, \hspace{10mm} i=1,\ldots,n.
\end{align}
 Combining (\ref{equation:x28}) for all controllers gives:
\begin{align}
\label{equation:x281}
Y_{t} = \begin{bmatrix}
       C^1\\[0.3em]
      C^2\\[0.3em]
      \vdots
      \\[0.3em]
      C^n        \end{bmatrix}
X_t + V_{t},
\end{align}
where $Y_t$ denotes $\textsl{vec}(Y_{t} ^1, Y_{t} ^2, \ldots, Y_{t} ^n)$ and $V_t$ denotes $\textsl{vec}(V_{t} ^1, V_{t} ^2, \ldots, V_{t} ^n)$. The initial state $X_1$ and the noise variables $W_t, t=1,\ldots,T-1,$ and $V_t, t=1,\ldots,T-1,$ are mutually independent and jointly Gaussian with the following probability distributions:
\begin{align*}
X_1 \sim \mathcal{N}(0,\Sigma_x), \hspace{5mm} W_t \sim \mathcal{N}(0,\Sigma_w), \hspace{5mm} V_t \sim \mathcal{N}(0,\Sigma_v).
\end{align*}
The information available to the $i$th controller at time $t$ is:
\begin{align}
\label{equation:x31}
I_t ^i &= \lbrace Y_{1:t} ^i, U_{1:t-1} ^i \rbrace \hspace{10mm} i=1,\ldots,n. 
\end{align}
Each controller $i$, chooses  its action $U_t ^i $ according to $U_t^i = g_t^i(I_t^i)$ and the performance of the control strategies of all controllers, ($g^1,\ldots, g^n$), is measured by (\ref{equation:xx4}).

The optimization problem is defined as follows.
\begin{pb}
\label{Prob_3}For the model described above, find control strategies $g^1,\ldots, g^n$ for the $n$ controllers that minimize the expected cost given by (\ref{equation:xx4}). 
\end{pb}

In addition to the decentralized information structure described above, we will also consider the centralized information structure and the corresponding strategy optimization.

\begin{pb}
\label{Prob_4}
For the model described above, assume that the information available to each controller is
\begin{align}
\label{equation:xx5}
\tilde{I}_t = \lbrace Y_{1:t}, U_{1:t-1} \rbrace.
\end{align}
Controller $i$ chooses its action according to strategy $U^i_t = h^i_t(\tilde{I}_t)$. The objective is to select control strategies that minimize (\ref{equation:xx6}).
\end{pb}
The following lemma follows directly from the problem descriptions above and well-known results for the centralized LQG problem with output feedback \cite{kumar_varaiya}. 
\begin{lem} \label{lem_5}
\begin{enumerate}
\item The optimal cost in Problem \ref{Prob_4} (with centralized information structure) is a lower bound on the optimal cost in Problem \ref{Prob_3} (with decentralized information structure).
\item The optimal strategies in Problem \ref{Prob_4} have the form of $U_t = K_t Z_t$ where $Z_t = \mathrm{E} (X_t \vert \tilde{I}_t)$. $Z_t$ evolves according to the following equations:
\begin{align}
\label{equation:x37}
Z_1 &= L_1 Y_1 \nonumber \\
Z_{t+1} &= (I - L_{t+1}C)(AZ_t + BU_t) + L_{t+1}Y_{t+1}. 
\end{align}

We define  $\Sigma_{t} = \mathrm{E} [(X_t - Z_t)(X_t - Z_t)^{\tp} \vert \tilde{I}_t]$ which satisfies the following update equations:
\begin{align}
\label{equation:x381}
\Sigma_1 &= (I - L_{1}C)\Sigma_x \nonumber \\
\Sigma_{t+1} &= (I - L_{t+1}C)(A \Sigma_t A^{\tp} + \Sigma_w).
\end{align}
The matrices $L_1,\ldots, L_T$ in \eqref{equation:x37} and \eqref{equation:x381} satisfy the forward recursion:
\begin{align}
\label{equation:x382}
& L_1 = \Sigma_1 C^{\tp} [C\Sigma_1C^{\tp} +\Sigma_v]^{-1} \nonumber \\
&L_{t+1} = \nonumber \\
& (A \Sigma_t A^{\tp} + \Sigma_{w})C^{\tp}[C(A \Sigma_t A^{\tp} + \Sigma_{w})C^{\tp} +\Sigma_v]^{-1}.
\end{align}

\end{enumerate}
\end{lem}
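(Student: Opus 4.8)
The plan is to treat the two parts separately: part 1 is an elementary information-containment argument, and part 2 is a reduction to the classical separation/certainty-equivalence theory for centralized LQG with output feedback.

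For part 1, I would note that $I^i_t = \{Y^i_{1:t}, U^i_{1:t-1}\} \subseteq \{Y_{1:t}, U_{1:t-1}\} = \tilde I_t$ for every $i$ and every $t$, since $Y_t = \textsl{vec}(Y^1_t,\ldots,Y^n_t)$ and $U_t=\textsl{vec}(U^1_t,\ldots,U^n_t)$. Hence, given any tuple $(g^1,\ldots,g^n)$ feasible for Problem \ref{Prob_3}, each map $g^i_t$ is also a legitimate function of $\tilde I_t$ (it merely ignores the components of $\tilde I_t$ that are not in $I^i_t$). Using this tuple in Problem \ref{Prob_4} induces exactly the same joint distribution on $(X_{1:T},U_{1:T})$ and hence the same value of \eqref{equation:xx4}. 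Therefore every decentralized strategy is feasible for the centralized problem with the same cost, so the optimal cost of Problem \ref{Prob_4} is no larger than that of Problem \ref{Prob_3}.

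For part 2, I would first rewrite the stage cost in standard form, $c(X_t,U_t) = X_t^{\tp} Q X_t + 2 X_t^{\tp} S U_t + U_t^{\tp} R U_t$ with $Q = M^{\tp}M$, $S = M^{\tp}N$, $R = N^{\tp}N$, so that Problem \ref{Prob_4} is exactly a finite-horizon centralized LQG problem with output feedback, and then invoke the two classical ingredients. (i) Kalman filtering: because the dynamics and observations are linear, the noises and the initial state are jointly Gaussian, and $U_{1:t-1}$ is a measurable function of $\tilde I_t$, the conditional mean $Z_t = \mathrm{E}(X_t \mid \tilde I_t)$ is generated by the predictor--corrector recursion \eqref{equation:x37}; subtracting the known input $BU_t$ from the state update shows the control term drops out of the error dynamics, so the conditional error covariance $\Sigma_t = \mathrm{E}[(X_t-Z_t)(X_t-Z_t)^{\tp}\mid\tilde I_t]$ is deterministic and control-independent, satisfying \eqref{equation:x381} with the gains \eqref{equation:x382}; in particular $X_t - Z_t$ is zero-mean Gaussian and independent of $\tilde I_t$ (the ``no dual effect'' property). (ii) Dynamic programming: writing the cost-to-go and substituting $X_t = Z_t + (X_t - Z_t)$, the independence of the error from $\tilde I_t$ and from $U_t = h_t(\tilde I_t)$ makes the cross terms vanish in expectation, so the value function is quadratic in $Z_t$ up to an additive term depending only on the $\Sigma_\tau$'s; completing the square stage by stage yields the optimal law $U_t = K_t Z_t$, where $K_t$ is the gain of the associated deterministic LQR Riccati recursion with data $(A,B,Q,S,R)$.

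The main obstacle I anticipate is bookkeeping rather than conceptual: carefully verifying the ``no dual effect'' step --- that feeding the control back through the dynamics and observations alters neither the Kalman gain nor the error covariance --- and handling a possibly singular $R = N^{\tp}N$, in which case the optimal gain in the completion-of-squares step is defined through a pseudo-inverse, which is harmless over a finite horizon. Beyond that, the statement is a direct appeal to the standard centralized LQG results in \cite{kumar_varaiya}.
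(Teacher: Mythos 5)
Your proposal is correct and matches the paper's treatment: the paper offers no detailed proof of this lemma, stating only that it ``follows directly from the problem descriptions above and well-known results for the centralized LQG problem with output feedback,'' which is precisely your two-step argument (decentralized information sets are contained in the centralized one, hence the lower bound; and the centralized problem is a standard LQG problem solved by the separation principle). Your sketch of the Kalman-filter recursion, the no-dual-effect property, and the completion-of-squares step simply fills in the classical details the paper delegates to its citation.
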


\subsection{Main results}\label{sec:OF:mr}
In this section, we show that it is possible to construct optimal strategies in Problem \ref{Prob_3} from the optimal control strategy of Problem \ref{Prob_4}. 

\begin{theo}
\label{theorem_2}
Consider Problems \ref{Prob_3} and \ref{Prob_4}, and consider the optimal strategy, $U_t = K_t Z_t$,  of Problem \ref{Prob_4}, where $K_t$ and $Z_t$ are as defined in Lemma \ref{lem_5}. We write $L_{t+1}$ of Lemma \ref{lem_5} as $L_{t+1} = \begin{bmatrix} L_{t+1}^1 &L_{t+1}^2 &\ldots &L_{t+1}^n \end{bmatrix}$. The optimal control strategies of Problem \ref{Prob_3} can be written as
\begin{align}
\label{equation:x46}
U_t ^i =  \Lambda^i K_t S_t^i 
\end{align}
where $\Lambda^i$ is given by \eqref{Lamba_eq} and $S_t ^i$ satisfies the following update equations:
\begin{align}
\label{equation:x41}
       S_{1} ^i &= L_1^i  Y_{1}^i        \nonumber \\
       S_{t+1}^i &= (I- L_{t+1}C)(A S^i_t+ B^iU^i_t)  + L_{t+1}^i Y_{t+1}^i.
\end{align}
Moreover, the optimal strategies in Problem \ref{Prob_3} achieve the same cost as the optimal strategies in Problem \ref{Prob_4}.
\end{theo}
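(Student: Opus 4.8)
\emph{Proof proposal.} The plan is to prove two things and then invoke Lemma~\ref{lem_5}: (i) the candidate strategies in \eqref{equation:x46} are admissible under the information structure of Problem~\ref{Prob_3}, and (ii) when all controllers use them, the closed-loop state trajectory and the accumulated cost coincide, sample path by sample path, with those generated by the optimal centralized strategy $U_t = K_t Z_t$ of Problem~\ref{Prob_4}. Admissibility is immediate by unrolling \eqref{equation:x41}: $S_t^i$ is a deterministic function of $\{Y^i_{1:t}, U^i_{1:t-1}\}$, so $U^i_t = \Lambda^i K_t S_t^i$ depends only on $I^i_t$. Given (ii), equality of costs in expectation follows, and then part~1 of Lemma~\ref{lem_5} (which lower bounds the optimal Problem~\ref{Prob_3} cost by the optimal Problem~\ref{Prob_4} cost) forces the constructed strategies to be optimal.

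The first ingredient for (ii) is a purely algebraic consequence of Assumption~1. Since \eqref{ass_1} holds for \emph{every} $u$ and, by Lemma~\ref{lem_3}, $v^i = \Lambda^i u$ is a valid substitution, one obtains the matrix identities $B^i \Lambda^i = B$ and $N^i \Lambda^i = N$ for each $i$. These are the only structural facts about $\Lambda^i$ the argument uses; in particular $\sum_i B^i \Lambda^i K_t \xi = B K_t \xi$ for any vector $\xi$.

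Next I would couple the two systems: fix a common sample space carrying $X_1$, $W_{1:T-1}$, $V_{1:T-1}$, and run the centralized execution (Problem~\ref{Prob_4} with $U_t = K_t Z_t$) and the decentralized execution (Problem~\ref{Prob_3} with \eqref{equation:x46}) on the \emph{same} primitives. I would prove by induction on $t$ the joint statement: the state $X_t$ is identical in both executions; $\sum_{i=1}^n S_t^i = Z_t$; and $B U_t^{\mathrm{dec}} = B U_t^{\mathrm{cen}}$, $N U_t^{\mathrm{dec}} = N U_t^{\mathrm{cen}}$, where $U_t^{\mathrm{dec}} = \VVEC(U^1_t,\ldots,U^n_t)$. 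The base case $t=1$ uses $\sum_i L_1^i Y_1^i = L_1 Y_1$ together with $B^i\Lambda^i = B$, $N^i\Lambda^i = N$. For the inductive step: equality of states at time $t$ and $B U_t^{\mathrm{dec}} = B U_t^{\mathrm{cen}}$ force $X_{t+1}$, hence $Y_{t+1}$, to agree via \eqref{dynamics_eq}; summing \eqref{equation:x41} over $i$, using $\sum_i S_t^i = Z_t$, $\sum_i B^i U^i_t = B U_t^{\mathrm{cen}}$, and $\sum_i L_{t+1}^i Y_{t+1}^i = L_{t+1} Y_{t+1}$, reproduces exactly the recursion \eqref{equation:x37} for $Z_{t+1}$, so $\sum_i S_{t+1}^i = Z_{t+1}$; and then $B U_{t+1}^{\mathrm{dec}} = \sum_i B^i \Lambda^i K_{t+1} S_{t+1}^i = B K_{t+1} Z_{t+1} = B U_{t+1}^{\mathrm{cen}}$, and likewise for $N$. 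Since the state trajectories agree and the stage costs $(M X_t + N U_t)^{\tp}(M X_t + N U_t)$ agree (as $N U_t^{\mathrm{dec}} = N U_t^{\mathrm{cen}}$), taking expectations yields equality of the two costs.

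The one place that requires care — and the only real obstacle — is the bookkeeping in this coupling: the two systems genuinely have different information sets and different control inputs, and the whole argument rests on the facts that (a) only $B U_t$ and $N U_t$ enter the dynamics and the cost, so the substitutability identities $B^i\Lambda^i = B$, $N^i\Lambda^i = N$ render the mismatch between $U_t^{\mathrm{dec}}$ and $U_t^{\mathrm{cen}}$ invisible, and (b) the Kalman-filter recursion is linear and the gains are row-partitioned, so the local estimators aggregate as $\sum_i S_t^i = Z_t$. Everything else is routine substitution into \eqref{equation:x37}--\eqref{equation:x41}.
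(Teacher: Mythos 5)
Your proposal is correct, and it takes a genuinely different route from the paper. The paper proves the result by defining cost-to-go functions $\mathbf{V}_r(z)$ for the centralized strategy and $\hat{\mathbf{V}}_r(z,s^1,s^2)$ for the decentralized one, and running a \emph{backward} induction from $r=T$ to $r=1$ to show $\mathbf{V}_r(z)=\hat{\mathbf{V}}_r(z,s^1,s^2)$ whenever $z=s^1+s^2$; the identity $Z_t=\sum_i S_t^i$ is isolated as Lemma~\ref{lem_6} and the matching of the two strategies is done stage by stage inside conditional expectations. You instead couple the two executions on a common realization of $(X_1,W_{1:T-1},V_{1:T-1})$ and run a \emph{forward} induction establishing pathwise agreement of $X_t$, of $\sum_i S_t^i$ with $Z_t$, and of $BU_t$ and $NU_t$ across the two executions. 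Both arguments lean on exactly the same algebraic core, $B^i\Lambda^i=B$ and $N^i\Lambda^i=N$ (which follows from Assumption~1 and Lemma~\ref{lem_3} since the substitution identity holds for every $u$), and both finish by invoking the centralized lower bound of Lemma~\ref{lem_5}. Your coupling absorbs the content of Lemma~\ref{lem_6} into the induction hypothesis, which is arguably cleaner: the paper's Lemma~\ref{lem_6} tacitly requires that the $Z$-recursion and the $S^i$-recursions be driven by the same controls and observations, a point your joint induction makes explicit by carrying $BU_t^{\mathrm{dec}}=BU_t^{\mathrm{cen}}$ as part of the hypothesis. What your approach buys is a strictly stronger conclusion (sample-path equality of the accumulated cost, not merely equality in expectation) and the avoidance of conditional-expectation bookkeeping; what the paper's approach buys is an explicit value-function identity in the standard dynamic-programming format, which is the form one would reuse for infinite-horizon or steady-state extensions. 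I see no gap in your argument.
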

~\\
Observe that the strategies given by (\ref{equation:x46}) and \eqref{equation:x41} are valid control strategies under the information structure of Problem \ref{Prob_3} because they  depend only on $Y_{1:t}^i, U^i_{1:t-1}$ which are included in $I_t^i$. The states $S^i_t$ defined in  \eqref{equation:x41} are related to the centralized estimate $Z_t$ by the following result.
\begin{lem} \label{lem_6}
The centralized state estimate $Z_t$    and the states $S^i_t$ defined in \eqref{equation:x41} satisfy the following equation:
\begin{align}
\label{equation:x47}
Z_t =  \sum_{i=1}^{n}S_t ^i.
\end{align}
\end{lem}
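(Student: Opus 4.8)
The plan is to prove \eqref{equation:x47} by induction on $t$, showing that the recursion \eqref{equation:x41} for $\sum_i S^i_t$ collapses exactly onto the recursion \eqref{equation:x37} for $Z_t$. The key algebraic fact that makes this work is the substitutability identity: by Lemma \ref{lem_3}, for each $i$ we have $B^i \Lambda^i = B$ and $N^i \Lambda^i = N$, so when the decentralized control $U^i_t = \Lambda^i K_t S^i_t$ (from \eqref{equation:x46}) is fed through $B^i$, we get $B^i U^i_t = B^i \Lambda^i K_t S^i_t = B K_t S^i_t$. Hence $\sum_{i=1}^n B^i U^i_t = B K_t \sum_{i=1}^n S^i_t = B K_t Z_t = B U_t$ once the inductive hypothesis $\sum_i S^i_t = Z_t$ is in force; this is precisely the term that appears in the centralized estimator update.

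For the base case $t=1$: summing the first line of \eqref{equation:x41} gives $\sum_i S^i_1 = \sum_i L^i_1 Y^i_1 = L_1 Y_1 = Z_1$, using the block decomposition $L_1 = [L^1_1 \ \cdots \ L^n_1]$ and $Y_1 = \VVEC(Y^1_1,\ldots,Y^n_1)$ so that the block product expands as a sum. For the inductive step, assume $Z_t = \sum_i S^i_t$. Summing \eqref{equation:x41} over $i$:
\begin{align}
\sum_{i=1}^n S^i_{t+1} &= (I - L_{t+1}C)\Bigl(A\sum_{i=1}^n S^i_t + \sum_{i=1}^n B^i U^i_t\Bigr) + \sum_{i=1}^n L^i_{t+1} Y^i_{t+1} \notag \\
&= (I - L_{t+1}C)(A Z_t + B U_t) + L_{t+1} Y_{t+1} = Z_{t+1}, \notag
\end{align}
where the middle equality uses the inductive hypothesis on the first term, the substitutability computation $\sum_i B^i U^i_t = B K_t Z_t = B U_t$ on the second term, and the block expansion $\sum_i L^i_{t+1} Y^i_{t+1} = L_{t+1} Y_{t+1}$ on the last term; the final equality is just \eqref{equation:x37}.

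I do not anticipate a serious obstacle here; the result is essentially bookkeeping once the right identity $B^i \Lambda^i = B$ is invoked. The one point that requires a little care is making sure the decentralized controls $U^i_t$ appearing inside the $S^i_{t+1}$ recursion are indeed the ones prescribed by \eqref{equation:x46} (so that $B^i U^i_t = B K_t S^i_t$ rather than something else), and that the same holds for the $N$-matrix when this lemma is later used to compare costs — but for the statement of Lemma \ref{lem_6} itself only the dynamics identity $B^i \Lambda^i = B$ is needed.
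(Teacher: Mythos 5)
Your proof is correct and follows the same induction as the paper's, but you route the control term through more machinery than the lemma actually needs. The paper handles $\sum_{i=1}^n B^i U^i_t$ by the trivial block expansion $\sum_{i=1}^n B^i U^i_t = B U_t$ (since $B = [B^1 \ \cdots \ B^n]$ and $U_t = \VVEC(U^1_t,\ldots,U^n_t)$), which holds for \emph{whatever} controls are actually applied; the centralized estimator recursion \eqref{equation:x37} is driven by that same applied $U_t$, so the two recursions match with no appeal to substitutability at all. You instead invoke $B^i\Lambda^i = B$ to write $\sum_i B^iU^i_t = BK_t\sum_i S^i_t = BK_tZ_t$, which is valid but (a) ties the lemma to the specific controls \eqref{equation:x46} when it in fact holds for any consistent choice of applied controls, and (b) front-loads the substitutability identity, which the paper reserves for the proof of Theorem \ref{theorem_2}, where the genuinely nontrivial claim $BU_t = BK_tZ_t$ (and its $N$-counterpart) is what makes the decentralized trajectory and cost coincide with the centralized one. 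Your final equality $BK_tZ_t = BU_t$ is really that later fact; for Lemma \ref{lem_6} itself you could delete the $\Lambda^i$ step entirely. Everything else (base case via the block decomposition of $L_1$, summing the recursion, using the inductive hypothesis on $A\sum_i S^i_t$) matches the paper exactly.
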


\begin{proof}
We prove the result by induction. For $t=1$, from (\ref{equation:x37}), we have $Z_1 = L_1 Y_1$ and according to (\ref{equation:x41}), 
\begin{align}
\sum_{i=1}^{n}S_1 ^i = L_1^1  Y_{1}^1 + L_1^2  Y_{1}^2 + ... + L_1^n  Y_{1}^n = L_1 Y_1.
\end{align}
Now assume that $Z_t =  \sum_{i=1}^{n}S_t ^i $. We need to show that $Z_{t+1} =  \sum_{i=1}^{n}S_{t+1} ^i $. From (\ref{equation:x37}), it follows that 
\begin{equation}
Z_{t+1} = (I - L_{t+1}C)(AZ_t + BU_t) + L_{t+1}Y_{t+1}.
\end{equation}
 From (\ref{equation:x41}), we have
\begin{align}
\label{equation:x474}
&\sum_{i=1}^{n}S_{t+1} ^i = \sum_{i=1}^{n} [(I- L_{t+1}C)(A S^i_t+ B^i U^i_t)  + L_{t+1}^i Y_{t+1}^i]
\nonumber \\ &=
 (I- L_{t+1}C)(A  \sum_{i=1}^{n}S_t^i + \sum_{i=1}^n B^i U^i_t ) +\sum_{i=1}^{n} L_{t+1}^i Y_{t+1}^i] 
 \nonumber \\ &=
 (I - L_{t+1}C)(AZ_t  + BU_t)+ L_{t+1}Y_{t+1}.
\end{align}
Therefore, $Z_{t+1} =  \sum_{i=1}^{n}S_{t+1} ^i $.
\end{proof}

\begin{rem}
For the case of state feedback, it can be easily shown that $S^i_t = \textsl{vec}(0,\ldots,X^i_t,\ldots,0)$.
\end{rem}

The following result is an immediate consequence of Theorem \ref{theorem_2}.

\begin{cor}\label{cor:2}
For the model described in section \ref{sec:OF:sm}, consider any information structure under which  the  information of controller $i$ at time $t$, $\hat{I}^i_t$, satisfies 
\[ \lbrace Y^i_{1:t}, U^i_{1:t-1} \rbrace \subseteq \hat{I}^i_t \subseteq \lbrace Y_{1:t}, U_{1:t-1} \rbrace,\]
for all $i=1,\ldots,n$ and $t=1,\ldots,T$.
Then, the optimal strategies in this information structure are the same as in Theorem \ref{theorem_2}.
\end{cor}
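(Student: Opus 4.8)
\textbf{Proof proposal for Corollary \ref{cor:2}.}

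The plan is to sandwich the optimal cost of the intermediate information structure between the costs of Problems \ref{Prob_3} and \ref{Prob_4}, and then show that the strategies of Theorem \ref{theorem_2} are feasible in the intermediate structure, so that all three costs coincide. First I would observe that any information structure $\hat{I}^i_t$ satisfying $\{Y^i_{1:t},U^i_{1:t-1}\} \subseteq \hat{I}^i_t \subseteq \{Y_{1:t},U_{1:t-1}\}$ gives each controller at least as much information as in Problem \ref{Prob_3} and at most as much as in Problem \ref{Prob_4}; hence its optimal cost $\hat{\mathcal J}^*$ satisfies $\mathcal{J}^*_{\text{Prob }\ref{Prob_4}} \le \hat{\mathcal J}^* \le \mathcal{J}^*_{\text{Prob }\ref{Prob_3}}$. (The left inequality is the monotonicity-in-information fact already used in Lemma \ref{lem_5}(1): enlarging the information available to a controller can only help, since any strategy for a smaller information set is still implementable with a larger one. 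The right inequality is the same monotonicity applied between the intermediate and decentralized structures.)

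Next I would invoke Theorem \ref{theorem_2}, which asserts $\mathcal{J}^*_{\text{Prob }\ref{Prob_3}} = \mathcal{J}^*_{\text{Prob }\ref{Prob_4}}$. Combined with the sandwich above, this forces $\hat{\mathcal J}^* = \mathcal{J}^*_{\text{Prob }\ref{Prob_3}} = \mathcal{J}^*_{\text{Prob }\ref{Prob_4}}$. It then remains only to check that the specific strategies $U^i_t = \Lambda^i K_t S^i_t$ of Theorem \ref{theorem_2}, with $S^i_t$ updated as in \eqref{equation:x41}, are implementable under $\hat{I}^i_t$ and achieve this common value. Implementability is immediate: the recursion \eqref{equation:x41} for $S^i_t$ uses only $Y^i_{1:t}$ and $U^i_{1:t-1}$, both of which lie in $\{Y^i_{1:t},U^i_{1:t-1}\} \subseteq \hat{I}^i_t$. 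Since these strategies already attain $\mathcal{J}^*_{\text{Prob }\ref{Prob_3}}$ (by Theorem \ref{theorem_2}) and we have just shown $\hat{\mathcal J}^* = \mathcal{J}^*_{\text{Prob }\ref{Prob_3}}$, they are optimal for the intermediate structure as well; and since $\hat{\mathcal J}^* = \mathcal{J}^*_{\text{Prob }\ref{Prob_4}}$, no strategy in the intermediate structure can do better, so these are \emph{the} optimal strategies (up to the usual caveats about non-uniqueness of minimizers, which do not affect the cost).

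There is essentially no hard obstacle here — the corollary is a pure sandwiching argument riding entirely on Theorem \ref{theorem_2} and on the monotonicity of optimal cost with respect to information. The only point requiring a word of care is the direction of the information-monotonicity inequalities: one must be careful that \emph{every} controller's information set is nested in the right way (which is exactly what the hypothesis $\{Y^i_{1:t},U^i_{1:t-1}\} \subseteq \hat{I}^i_t \subseteq \{Y_{1:t},U_{1:t-1}\}$ guarantees for all $i$ and $t$), so that the corresponding strategy spaces are nested and the optimal costs are ordered accordingly. Given that, the proof is three lines.
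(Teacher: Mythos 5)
Your proposal is correct and is exactly the argument the paper intends: the paper states Corollary \ref{cor:2} as an ``immediate consequence'' of Theorem \ref{theorem_2}, relying on the same two observations you make --- the centralized cost lower-bounds the optimal cost of any information structure contained in $\lbrace Y_{1:t}, U_{1:t-1}\rbrace$, and the strategies of Theorem \ref{theorem_2} are implementable (and achieve that bound) whenever each $\hat{I}^i_t$ contains $\lbrace Y^i_{1:t}, U^i_{1:t-1}\rbrace$. No further comment is needed.
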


\subsection{Proof of Theorem \ref{theorem_2}}\label{sec:OF:pot}
For notational conveniences, we will describe the proof for $n=2$. If $U_t = K_t Z_t$ is the optimal control strategy of Problem \ref{Prob_4}, then from Lemma \ref{lem_6}, we have:
\begin{align}
\label{equation:x475}
U_t = K_t Z_t = K_t (S_t ^1 + S_t ^2)
\end{align}
We claim that the decentralized control strategies defined in Theorem \ref{theorem_2}, that is 
\begin{equation}
U_t =  \begin{bmatrix}
           U^1_t\\[0.3em]
           U^2_t
          \end{bmatrix} =\begin{bmatrix}
      \Lambda^1 K_t S_t^1\\[0.3em]
      \Lambda^2 K_t S_t^2        \end{bmatrix},\label{equation:x475a}
\end{equation}
 yield the same expected cost as the optimal centralized control strategies $U_t = K_t Z_t$.

To establish the above claim, we define cost-to-go functions under the optimal   centralized strategy and the  strategies defined in Theorem \ref{theorem_2}. These functions, denoted by  $\mathbf{V}_r(z)$ and $\hat{\mathbf{V}}_r(z,s^1,s^2)$ for $r=T,T-1,\ldots,1$,  are defined as follows:
\begin{align}
\label{equation:x49}
&\mathbf{V}_r(z) = \notag \\
& \mathrm{E}\Big[\sum_{t=r}^{T} (MX_t + NU_t)^{{\tp}} (MX_t + NU_t) \vert Z_r = z, U_r = K_r z\Big], 
\end{align}
where $U_t$ is given by \eqref{equation:x475} for all $t$, and 
\begin{align}
\label{equation:x50}
&\hat{\mathbf{V}}_r(z,s^1,s^2) = \notag \\
& \mathrm{E}\Big[\sum_{t=r}^{T} (MX_t + NU_t)^{{\tp}} (MX_t + NU_t)  \vert  Z_r =z, S_r^1 = s^1,\nonumber \\ &
 S_r^2 = s^2, U_r^1 = \Lambda^1 K_r s^1, U_r^2 = \Lambda^2 K_r s^2\Big],
\end{align}
where $U_t$ is given by \eqref{equation:x475a} for all $t$.   The function $\hat{\mathbf{V}}_r(z,s^1,s^2)$ in  (\ref{equation:x50}) is defined only for $z= s^1 + s^2$;  $\hat{\mathbf{V}}_r(z,s^1,s^2)$ is undefined for $z\neq s^1 + s^2$.

We will show that for $r=1,...,T$, $\mathbf{V}_r (z) = \hat{\mathbf{V}}_r(z,s^1,s^2) \hspace{2mm} \forall z,s^1,s^2 \in \mathbb{R}^{d_x}$ such that $z=s^1+s^2$. We  follow a backward induction argument. For $r=T$, we have,
\begin{align}
\label{equation:x51}
&\mathbf{V}_T(z) = \mathrm{E}[(MX_T + NU_T)^{{\tp}} (MX_T + NU_T) \vert Z_T = z,  \nonumber \\
&U_T= K_T z] \\
\label{equation:x52}
&\hat{\mathbf{V}}_T(z,s^1,s^2) = \nonumber \\
& \mathrm{E}[(MX_T + NU_T)^{{\tp}} (MX_T + NU_T) \vert  Z_T = 
z, S_T^1 = s^1, \notag \\ &S_T^2 = s^2, U_T^1 = \Lambda^1 K_T s^1, U_T^2 = \Lambda^2 K_T s^2].
\end{align}
Since the only difference between (\ref{equation:x51}) and (\ref{equation:x52}) is with respect to their different control strategies, it suffices to show that the term $NU_T$ is the same under these two control strategies.

Under control action $u_T = K_T z$, we have $Nu_T = NK_T z$.
 Under control actions $u_T^1 = \Lambda^1 K_T s^1, u_T^2 = \Lambda^2 K_T s^2$, we have 
\begin{align}
\label{equation:x521}
Nu_T &= \begin{bmatrix} N^1 &N^2  \end{bmatrix} u_T = N^1 u_T^1 + N^2 u_T^2 \nonumber \\
&= N^1 \Lambda^1 K_T s^1 + N^2 \Lambda^2 K_T s^2 
\end{align}
From the substitutability assumption (Assumption 1) and Lemma \ref{lem_3}, for any vector $u$,  $Nu = N^i l^i(u) = N^i \Lambda^i u$. Therefore, 
\begin{align}
&N^1 \Lambda^1 K_T s^1 = NK_T s^1, \notag \\
&N^2 \Lambda^2 K_T s^2 = NK_T s^2. 
\end{align}
  (\ref{equation:x521}) can now be written as,
\begin{align}
\label{equation:x522}
N^1 \Lambda^1 K_T s^1 + N^2 \Lambda^2 K_T s^2 = N(K_T s^1 + K_T s^2) = NK_T z
\end{align}
where the last equality is true because $z=s^1 +s^2$. Therefore, $\mathbf{V}_{T}(z) = \hat{\mathbf{V}}_{T}(z,s^1,s^2) \hspace{2mm} \forall z,s^1,s^2 \in \mathbb{R}^{d_x}$ such that $z=s^1+s^2$. 

Now, assume that $\mathbf{V}_{k+1}(z) = \hat{\mathbf{V}}_{k+1}(z,s^1,s^2)\hspace{2mm} \forall z,s^1,s^2 \in \mathbb{R}^{d_x}$ such that $z=s^1+s^2$. We need to show that $\mathbf{V}_{k}(z) = \hat{\mathbf{V}}_{k}(z,s^1,s^2)\hspace{2mm} \forall z,s^1,s^2 \in \mathbb{R}^{d_x}$ with $z=s^1+s^2$. For this, note that one can use dynamic programming arguments to write the cost-to-go functions $\mathbf{V}_k$ and $\hat{\mathbf{V}}_k$   in terms of instantaneous cost and the next stage cost-to-go functions:
\begin{align}
\label{equation:x54}
&\mathbf{V}_k(z) = \mathrm{E}[ (MX_t + NU_t)^{{\tp}} (MX_t + NU_t)
 \vert Z_k = z, U_k =
 \nonumber \\
 &
K_k z] + \mathrm{E}[\mathbf{V}_{k+1}(Z_{k+1}) \vert Z_k = z, U_k = K_k z],
\end{align}
and
\begin{align}
\label{equation:x55}
&\hat{\mathbf{V}}_k(z,s^1,s^2) =
\nonumber \\ &
 \mathrm{E}[(MX_t + NU_t)^{{\tp}} (MX_t + NU_t)
 \vert Z_k = z, 
 S_k^1= s^1, 
 \nonumber \\ &
 S_k^2= s^2,
U_k^1 = \Lambda^1 K_k s^1,
 U_k^2 = \Lambda^2 K_k s^2]
\nonumber \\ &+ \mathrm{E}[\hat{\mathbf{V}}_{k+1}
(Z_{k+1},S_{k+1}^1,S_{k+1}^2) \vert 
Z_k = z, S_k^1= s^1, S_k^2= s^2,
\nonumber \\ &
 U_k^1 =  \Lambda^1
K_k s^1, U_k^2 = \Lambda^2 K_k s^2].
\end{align} 
The first expectation on the right hand side of (\ref{equation:x54}) can be shown to be equal to the first expectation on the right hand side of (\ref{equation:x55}) by repeating the arguments used at time $T$.  Using  Lemma \ref{lem_5}, the second expectation on the right hand side of (\ref{equation:x54}) can be written as,
\begin{align}
&\mathrm{E}[\mathbf{V}_{k+1}(Z_{k+1}) \vert Z_k = z, U_k = K_k z] \notag \\
&= 
\mathrm{E}[\mathbf{V}_{k+1}\Big( (I - L_{k+1}C)(AZ_k + BU_k) + \nonumber \\ &~~~~L_{k+1}(CX_{k+1} +V_{k+1})\Big) \vert Z_k = z, U_k = K_k z] \nonumber \\
&=\mathrm{E}[\mathbf{V}_{k+1}\Big( (I - L_{k+1}C)(A + BK_k)z + L_{k+1}(CAX_k +  \nonumber \\
& ~~~~~CBK_kz + CW_k +V_{k+1})\Big) \vert Z_k = z, U_k = K_k z] \notag \\
& = \mathrm{E}[\mathbf{V}_{k+1}\Big((A+
BK_k)z 
 + L_{k+1}CA(X_k - z)  \nonumber \\ &~~~~+ L_{k+1}(CW_k +V_{k+1})\Big) \vert Z_k = z]. \label{eq:ind_0}
\end{align}
Furthermore, because of the induction hypothesis, the second expectation on the right hand side of (\ref{equation:x55}) can be written as,
\begin{align}
&\mathrm{E}[\hat{\mathbf{V}}_{k+1}(Z_{k+1},S_{k+1}^1,S_{k+1}^2) \vert
Z_k = z, S_k^1=s^1, S_k^2 = s^2, 
\nonumber \\
&U_k^1 
= \Lambda^1 K_k s^1, U_k^2
= \Lambda^2 K_k s^2] \notag \\
& =\mathrm{E}[{\mathbf{V}}_{k+1}(Z_{k+1}) \vert
Z_k = z, S_k^1=s^1, S_k^2 = s^2, 
\nonumber \\
&U_k^1 
= \Lambda^1 K_k s^1, U_k^2
= \Lambda^2 K_k s^2]. \label{eq:ind_1}
\end{align}
\eqref{eq:ind_1} can be further written as
\begin{align}
& \mathrm{E}[{\mathbf{V}}_{k+1}\Big((I- L_{k+1}
C)(Az + B^1U^1_k  + B^2 U^2_k) +\notag \\
&L_{k+1} (CAX_k +  CB^1U^1_k + CB^2U^2_k + CW_k +V_{k+1})\Big)  \vert  \nonumber \\
&Z_k = z, S_k^1=s^1, S_k^2 = s^2, U_k^1 = \Lambda^1 K_k s^1, U_k^2 = 
\Lambda^2 K_k s^2]. \label{eq:ind_2}
\end{align}
From the substitutability assumption (Assumption 1) and  Lemma \ref{lem_3}, for any vector $u$,  $Bu = B^i l^i(u) = B^i \Lambda^i u$. Therefore, 
\begin{align}
&B^1 \Lambda^1 K_k s^1 = BK_k s^1, \notag \\
&B^2 \Lambda^2 K_k s^2 = BK_k s^2. 
\end{align}
\eqref{eq:ind_2} can now be written as 
\begin{align}
& \mathrm{E}[{\mathbf{V}}_{k+1}\Big((I- L_{k+1}
C)(Az + BK_ks^1  + BK_ks^2) +\notag \\
&L_{k+1} (CAX_k +  CBK_ks^1 + CBK_ks^2 + CW_k +V_{k+1})\Big)  \vert \notag \\& Z_k = z,  S_k^1=s^1, S_k^2 = s^2, U_k^1 = \Lambda^1 K_k s^1, \notag \\
& U_k^2 = 
\Lambda^2 K_k s^2]  = \mathrm{E}[{\mathbf{V}}_{k+1}\Big((Az + BK_kz) +\notag \\
&L_{k+1} CA(X_k-z) +   L_{k+1}(CW_k +V_{k+1})\Big)  \vert Z_k = z]. \label{eq:ind_3}
\end{align}
 \eqref{eq:ind_3} is the same as \eqref{eq:ind_0}. Therefore, $\mathbf{V}_{k}(z) = \hat{\mathbf{V}}_{k}(z,s^1,s^2) \hspace{2mm} \forall z,s^1,s^2 \in \mathbb{R}^{d_x}$ such that $z=s^1+s^2$.

 Now, the expected cost under the centralized  control strategy, $U_t =K_tZ_t$, can be written as,
\begin{align}
&\mathrm{E}\left[\sum_{t=1}^{T} c(X_t, K_t Z_t)\right] \notag \\
&=\mathrm{E}\left[\mathrm{E}\left[\sum_{t=1}^{T} c(X_t, K_t Z_t)\Big\vert Z_1, U_1 = K_1Z_1\right]\right] \nonumber \\
&= \mathrm{E} [\mathbf{V}_1(Z_1)],\label{eq:cost1}
  \end{align}
  while the expected cost under the decentralized strategies of Theorem \ref{theorem_2} can be written as
 \begin{align}
&\mathrm{E}\left[\sum_{t=1}^{T} c\left(X_t, \begin{bmatrix}
       \Lambda^1 K_t S_t^1\\[0.3em]
      \Lambda^2 K_t S_t^2        \end{bmatrix}\right)\right] \nonumber \\
&= 
\mathrm{E}\Big[\mathrm{E}\big[\sum_{t=1}^{T} c\left(X_t, \begin{bmatrix}
       \Lambda^1 K_t S_t^1\\[0.3em]
      \Lambda^2 K_t S_t^2        \end{bmatrix}\right)\Big\vert Z_1, S^1_1, S^2_1, U_1^1 =
\nonumber \\
&      
      \Lambda^1 K_1 S_1^1, U_1^2 =
      \Lambda^2 K_1 S_1^2 \big]\Big]     \notag \\
       &= \mathrm{E} [\hat{\mathbf{V}}_1(Z_1,S_1^1,S_1^2)] \label{eq:cost2}
       \end{align}
Because $\mathbf{V}_{1}(z) = \hat{\mathbf{V}}_{1}(z,s^1,s^2) \hspace{2mm} \forall z,s^1,s^2 
$ such that  $z=s^1+s^2$, \eqref{eq:cost1} and
\eqref{eq:cost2} are equal. Thus, the decentralized control strategies of Theorem \ref{theorem_2} achieve the same expected cost as the optimal centralized strategies.

\section{Concluding Remarks}
We considered a decentralized system with multiple controllers and defined a property called  substitutability of one controller by another in open-loop strategies. For the LQG problem,  our results show that, under the substitutability assumption,  linear strategies are optimal and  we provide a complete state space characterization of optimal strategies.  
Our results suggest that open-loop substitutability can work as a counterpart of  the information structure requirements that enable simplification of decentralized control problems. 
\bibliographystyle{IEEEtran}
\bibliography{cdc15,collection}

\end{document}